\newtheorem{theorem}{Theorem}
\newtheorem{lemma}[theorem]{Lemma}
\newtheorem{definition}{Definition}
\begin{document}
%
\title{ The 4-Adic Complexity of A Class of Quaternary Cyclotomic Sequences with\\ Period $2p$}
%
%
%

\author{Shiyuan~Qiang,
        Yan~Li,
        Minghui~Yang*,
        and~Keqin~Feng
\thanks{*Corresponding author}
\thanks{The work was supported by the National Science Foundation of China (NSFC) under Grant 12031011, 11701553.}
\thanks{Shiyuan Qiang is with the department of Applied Mathematics, China Agricultural, university, Beijing 100083, China (e-mail: qsycau\_18@163.com).}
\thanks{Yan Li is with the department of Applied Mathematics, China Agricultural, university, Beijing 100083, China (e-mail: liyan\_00@cau.edu.cn).}
\thanks{Minghui Yang is with State Key Laboratory of Information Security, Institute of Information Engineering, Chinese Academy of Sciences, Beijing 100093, China (e-mail:  yangminghui6688@163.com).}
\thanks{Keqin Feng is with the department of Mathematical Sciences, Tsinghua University, Beijing 100084, China (email: fengkq@mail.tsinghua.edu.cn).}
}

%
%

\markboth{}%
{Shell \MakeLowercase{\textit{et al.}}: Bare Demo of IEEEtran.cls for IEEE Journals}
%



\maketitle

\begin{abstract}
In cryptography, we hope a sequence over $\mathbb{Z}_m$ with period $N$ having larger $m$-adic complexity.Compared with the binary case, the computation of 4-adic complexity of knowing quaternary sequences has not been well developed. In this paper, we determine the 4-adic complexity of the quaternary cyclotomic sequences with period 2$p$ defined in \cite{Y.-J. Kim}. The main method we utilized is a quadratic Gauss sum $G_{p}$ valued in $\mathbb{Z}_{4^N-1}$ which can be seen as a version of classical quadratic Gauss sum. Our results show that the 4-adic complexity of this class of quaternary cyclotomic sequences reaches the maximum if $5\nmid p-2$ and close to the maximum otherwise.

\end{abstract}

\begin{IEEEkeywords}
4-adic complexity, quaternary cyclotomic sequences, quadratic Gauss sum, cryptography
\end{IEEEkeywords}

%
\IEEEpeerreviewmaketitle

\section{Introduction}

Periodic sequences over finite field $\mathbb{F}_q$ or finite ring $\mathbb{Z}_m=\mathbb{Z}/m\mathbb{Z}$ have many important applications in spread-spectrum multiple-access communication and cryptography. In the stream cipher schemes we need the sequences having good pseudorandom cryptographic properties and large linear complexity \cite{T. W. Cusick}.

In the past three decades, many series of such sequences have been investigated, their autocorrelation and linear complexity have been determined or estimated. A sequence with linear complexity $n$ can be generated by a linear shift register of length $n$ and the period of a sequence is an upper bound of $n$. In 1990's, Klapper, Goresky and Xu [4, 5] described a kind of non-linear shift registers (feedback with carry shift registers (FCSRs)) and raised a new complexity, called $m$-adic complexity.

\begin{definition} \label{def1} Let $m\geq 2$ be a positive integer, $A=\{a(i)\}_{i=0}^{N-1}$ be a sequence over $\mathbb{Z}_m$ with period $N$, $a(i)\in\{0, 1, \ldots, m-1\}$ for $0\leq i \leq N-1$. Let $S_{A}(m)=\sum_{i=0}^{N-1}a(i)m^{i}\in \mathbb{Z}$ and $d=\gcd(S_{A}(m), m^{N}-1)$. The m-adic complexity of the sequence A is defined by
$$C_A(m)=\log_m \left(\frac{m^N-1}{d}\right).$$
\end{definition}

Roughly speaking, a sequence $A$ with period $N$ over $\mathbb{Z}_m$ can be generated by an FCSR  of length $\lceil C_A(m)\rceil$. In cryptography, we hope a sequence $A$ over $\mathbb{Z}_m$ with period $N$ having larger $m$-adic complexity $C_A(m)$. By the Definition \ref{def1} we know that $\lceil C_A(m) \rceil \leq N $, where for $\alpha >0$, $\lceil \alpha \rceil$ is the smallest integer $n$ such that $n\geq \alpha$.

In the past decade, the 2-adic complexity $C_A(2)$ has been determined or estimated for many binary sequences $A$ with good autocorrelation properties. Particularly, for all known binary sequences $A=\{a(i)\}_{i=0}^{N-1}\ (a(i)\in\{0,1\})$ with period $N \equiv 3\pmod 4$ and ideal autocorrelation $(\sum_{i=0}^{N-1}(-1)^{a(i+\tau)-a(i)}=-1$, for all $1\leq\tau\leq N-1)$, the 2-adic complexity $C_A(2)$ reaches the maximum value $\log_{2}(2^{N}-1)$ \cite{H. G. Hu}. On the other hand, quaternary sequences (over $\mathbb{Z}_4$) are also important sequences in many practical applications \cite{S. M. Krone}. Comparing with the binary case, the computation of 4-adic complexity of knowing quaternary sequences has not been well developed. In this paper we determine the 4-adic complexity of the quaternary cyclotomic sequences give by Kim et al. in \cite{Y.-J. Kim}.

Let $p$ be an odd prime, $(\frac{{.}}{p}): \mathbb{Z}_p^{\ast}=\{1, 2, \ldots, p-1\}\rightarrow \{\pm 1\}$ be the Legendre symbol. Namely, for $a\in \mathbb{Z}_p^{\ast}$,
\begin{align*}
 \big(\frac{a}{p}\big)
 = \left\{ \begin{array}{ll}
1, & \textrm{if $a$ is a square in $\mathbb{Z}_p^{\ast}$;}\\
-1, & \textrm{otherwise.}
\end{array} \right.
\end{align*}

Let $g$ be a primitive element modulo 2$p$, $\mathbb{Z}_{2p}^{\ast}=\langle g\rangle$ and
$$D_{0}^{(2p)}=\langle g^{2}\rangle\subseteq \mathbb{Z}_{2p}^{\ast}, \quad D_{1}^{(2p)}= gD_{0}^{(2p)}\subseteq \mathbb{Z}_{2p}^{\ast}$$
$$D_{0}^{(p)}=\{a\in \mathbb{Z}_p^{\ast}:\big(\frac{a}{p}\big)=1\}, \quad D_{1}^{(p)}=\{a\in \mathbb{Z}_p^{\ast}:\big(\frac{a}{p}\big)=-1\}$$
Then
$$\mathbb{Z}_{2p}=D_{0}^{(2p)}\bigcup D_{1}^{(2p)}\bigcup 2D_{0}^{(p)}\bigcup 2D_{1}^{(p)}\bigcup \{0, p\}\qquad(disjoint)$$
\begin{definition} \label{def2}(\cite{Y.-J. Kim}) Define a quaternary sequence $A=\{a(i)\}_{i=0}^{2p-1}$ over $\mathbb{Z}_{4}=\{0,1,2,3\}$ with period $N=2p$ by
\begin{align*}
 a(i)
 = \left\{ \begin{array}{ll}
0, & \textrm{if $i=0$ or $i \in D_{0}^{(2p)}$}\\
2, & \textrm{if $i=p$ or $i\in 2D_{0}^{(p)}$}\\
1, & \textrm{if $i\in D_{1}^{(2p)}$}\\
3, & \textrm{if $i\in 2D_{1}^{(p)}$}
\end{array} \right.
\end{align*}

\end{definition}

The autocorrelation of this quaternary sequence has been computed in \cite{Y.-J. Kim}. Du and Chen \cite{X. Du} translated this sequence into a sequence $A'$ over the finite field $\mathbb{F}_4$ by the Gray mapping and computed the linear complexity of $A'$ over $\mathbb{F}_4$. The following theorem is our main result which determines the 4-adic complexity of the quaternary sequence $A$.

\begin{theorem}\label{th1}
Let $A$ be the quaternary sequence with period $N=2p$ defined by Definition \ref{def2}. Then the 4-adic complexity of $A$ is
\begin{align*}
 C_{A}(4)
 = \left\{ \begin{array}{ll}
\log_{4}(\frac{4^{N}-1}{5}), & \textrm{if $5\mid p-2$;}\\
\log_{4}(4^{N}-1), & \textrm{otherwise.}
\end{array} \right.
\end{align*}
\end{theorem}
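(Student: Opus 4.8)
The plan is to compute the integer $d=\gcd(S_A(4),4^{N}-1)$ directly and read off $C_A(4)=\log_4((4^N-1)/d)$ from Definition \ref{def1}. Since $N=2p$ and $4^{2p}-1$ is odd, $2$ is a unit in $R:=\mathbb{Z}_{4^{2p}-1}$, so I may work with $2S_A(4)$ throughout. Writing the defining classes out, I would first record the closed form
\[
2S_A(4)\equiv 3\cdot 4^{p}+9P-5-G_p \pmod{4^{2p}-1},
\]
where $P=\frac{4^{2p}-1}{15}=\sum_{a=0}^{p-1}4^{2a}$ and $G_p=G+H$ is the quaternary quadratic Gauss sum built from $G=\sum_{a\in\mathbb{Z}_p^{\ast}}\left(\frac{a}{p}\right)4^{2a}$ and $H=\sum_{i\in\mathbb{Z}_{2p}^{\ast}}\left(\frac{i}{p}\right)4^{i}$. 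This reduction uses only that $D_0^{(2p)},D_1^{(2p)}$ reduce mod $p$ to the quadratic residues and non-residues, together with the unit $\tfrac12\in R$.

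The arithmetic at the ``small'' primes is immediate and already explains the dichotomy. Since $4\equiv1\pmod 3$, one gets $S_A(4)\equiv\sum_i a(i)=3p-1\equiv2\pmod3$, so $3\nmid S_A(4)$. Since $4\equiv-1\pmod5$, a signed count gives $S_A(4)\equiv 2(p-2)\pmod5$; hence $5\mid S_A(4)$ exactly when $5\mid p-2$. In that case $p\equiv2\pmod5$ forces $5\nmid p$, and as $\mathrm{ord}_{25}(4)=10$ we get $25\nmid 4^{2p}-1$, so $5$ divides $d$ to the first power only.

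The core of the argument is to show that no prime $q\mid 4^{2p}-1$ with $q\neq3,5$ divides $S_A(4)$. Here the Gauss sum enters: working modulo such a $q$ one has $15P=4^{2p}-1\equiv0$ with $q\nmid15$, so $P\equiv0$, and a short computation (using that for $p\nmid m$ the exponents $am\bmod p$ permute $\{0,\dots,p-1\}$, which is what replaces the vanishing geometric sum that would hold over a field) gives the clean identity $G^2\equiv H^2\equiv\left(\frac{-1}{p}\right)p\pmod q$. I would then set $\delta=4^{p}\bmod q\in\{\pm1\}$ (since $(4^p)^2=1$ in $R$) and $\lambda=\left(\frac{2}{p}\right)$, and split $G,H$ into the odd- and even-exponent partial sums $u,v$ over $\mathbb{Z}_p^{\ast}$ to obtain $G\equiv\lambda(v+\delta u)$ and $H\equiv u+\delta v$. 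In each of the four $(\delta,\lambda)$ cases this collapses $2S_A(4)\bmod q$ into a constant plus a multiple of a single classical Gauss sum ($\Gamma=u+v$ when $\delta=1$, $\Delta=u-v$ when $\delta=-1$), whose square is $\left(\frac{-1}{p}\right)p$; assuming $q\mid S_A(4)$ then pins that Gauss sum to an explicit value and forces $q$ to divide a fixed small integer such as $\left(\frac{-1}{p}\right)p-1$ or $\left(\frac{-1}{p}\right)p-16$.

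The hard part is closing this last step. The constraint $q\mid 4^{2p}-1$ controls $\mathrm{ord}_q(4)\in\{1,2,p,2p\}$, and discarding the excluded divisors $1,2$ (which only yield $q\mid15$) forces $\mathrm{ord}_q(4)\in\{p,2p\}$, whence $q\equiv1\pmod p$ and $q$ is large; comparing this lower bound with the small integer that $q$ must divide rules out every candidate except a short finite list, which I would eliminate by checking $q\nmid 4^{p}\pm1$ directly. Assembling the three pieces, $d=5$ when $5\mid p-2$ and $d=1$ otherwise, giving the stated value of $C_A(4)$. The two places demanding care are the Gauss sum identity in the non-field ring $R$ (where e.g.\ $G^2$ acquires the correction term $-\left(\frac{-1}{p}\right)P$ absent over $\mathbb{C}$) and the size-versus-divisibility endgame in the $(\delta,\lambda)=(-1,-1)$ case.
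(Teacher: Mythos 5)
Your plan is essentially the paper's proof: the closed form $2S_A(4)\equiv 3\cdot 4^p-5+9P-(G+H)$ agrees with the paper's Lemma \ref{lem2}(1) (since $15\mid 4^p-4$ makes $(4^p+5)P\equiv 9P$, and $H=\big(\tfrac{2}{p}\big)4^pG$), the identity $G^2\equiv\big(\tfrac{-1}{p}\big)p\pmod q$ is Lemma \ref{lem2}(2), and your case split on $\delta=4^p\bmod q$ and $\lambda=\big(\tfrac{2}{p}\big)$ reproduces the paper's analysis of $d_-$ and $d_+$. Your mod $3$ and mod $5$ evaluations via $4\equiv 1,-1$ are a pleasant shortcut past the paper's route through $G_p$, and the $25\nmid 4^{2p}-1$ observation is cleaner than the paper's.

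One concrete weak point in the endgame. Because you only use $\mathrm{ord}_q(4)\in\{p,2p\}$, your lower bound is $q\geq 2p+1$ rather than the paper's $4p\mid q-1$ (obtained from $\mathrm{ord}_q(2)=4p$ when $4^p\equiv-1$), so your finite list in the $\delta=-1$ branch is $p\leq 15$ rather than $p\leq 5$; that is harmless but means more cases. More seriously, your stated elimination criterion --- ``checking $q\nmid 4^p\pm1$ directly'' --- does not dispose of the surviving candidate $(p,q)=(3,13)$: there $13\mid 4^3+1=65$ and $13\mid p-16$, and $13\equiv 1\pmod 6$, so every sieve condition is satisfied. This case genuinely requires evaluating the sequence itself, e.g.\ $G_3=4^2-4^4\equiv -6\not\equiv -4\pmod{13}$ as the paper does, or directly $S_A(4)=1952\equiv 2\pmod{13}$ for the length-$6$ sequence. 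As written, your proof would leave that one divisor unexcluded; add the explicit computation and the argument closes.
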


In Section \ref{sec2} we introduce a quadratic Gauss sum $G_{p}$ valued in $ \mathbb{Z}_{4^{N}-1}$ as a version of classical quadratic Gauss sum, prove a property of $G_{p}$, and show that $S_{A}(4)=\sum_{i=0}^{N-1}a(i)4^{i}$ can be expressed by $G_{p}$ modulo $4^{N}-1$. In Section \ref{sec3} we prove Theorem \ref{th1}.

\section{Preliminaries}\label{sec2}

Let $p$ be an odd prime, $N=2p$. From the fact that $a\equiv b\pmod p$ implies $4^{2a}\equiv 4^{2b}\pmod {4^{N}-1}$ we can define an element $G_{p}$ in $\mathbb{Z}_{4^{N}-1}$:

$$G_{p}=\sum_{a\in \mathbb{Z}_p^{\ast}}\big(\frac{a}{p}\big)4^{2a}=\sum_{a=1}^{p-1}(\frac{a}{p}\big)4^{2a}\pmod {4^{N}-1}$$
The following result shows that $S_{A}(4)$ can be expressed by $G_{p}$ modulo $4^{N}-1$ and $G_{p}$ has a similar property as classical quadratic Gauss sum.

\begin{lemma}\label{lem2}
Let $A=\{a(i)\}_{i=0}^{N-1}$ be the quaternary sequence over $\mathbb{Z}_{4}$ with period $N=2p$ $(p\geq 3)$ defined by Definition \ref{def2}.
Then
 \begin{align*}
\begin{array}{ll}
& \textrm{$\rm(1)$}\ S_{A}(4)\equiv \frac{1}{2}(3\cdot 4^{p}-5)+\frac{4^{p}+5}{2}\frac{4^{N}-1}{15}-\frac{1}{2}(\big(\frac{2}{p}\big)4^{p}+1)G_{p}\pmod {4^{N}-1}\\
& \textrm{$\rm(2)$}\ G_{p}^{2}\equiv \big(\frac{-1}{p}\big)(p-\frac{4^{N}-1}{15})\pmod {4^{N}-1}
\end{array}
\end{align*}
\end{lemma}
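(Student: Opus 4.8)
The plan is to treat the two parts separately, regarding $G_{p}$ throughout as an element of the ring $\mathbb{Z}_{4^{N}-1}$ in which $16=4^{2}$ is a $p$-th root of unity, since $16^{p}=4^{2p}=4^{N}\equiv 1$. Two facts should be recorded first and used repeatedly: $4^{N}-1$ is odd and divisible by $15$, so $2,4$ and hence $4^{p}$ are invertible modulo $4^{N}-1$ with $4^{-p}\equiv 4^{p}$ (because $(4^{p})^{2}=4^{N}\equiv 1$); and $\sum_{a=0}^{p-1}16^{a}=\frac{16^{p}-1}{15}=\frac{4^{N}-1}{15}$ holds as an exact integer identity. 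Write $T=\sum_{a=1}^{p-1}16^{a}=\frac{4^{N}-1}{15}-1$.

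For part $(2)$ I would imitate the evaluation of the classical quadratic Gauss sum inside this ring. Expanding $G_{p}^{2}=\sum_{a,b\in\mathbb{Z}_{p}^{\ast}}\big(\frac{ab}{p}\big)16^{a+b}$ and substituting $b=ac$ (so $c$ runs over $\mathbb{Z}_{p}^{\ast}$ and $\big(\frac{ab}{p}\big)=\big(\frac{c}{p}\big)$) gives $G_{p}^{2}=\sum_{c\in\mathbb{Z}_{p}^{\ast}}\big(\frac{c}{p}\big)\sum_{a\in\mathbb{Z}_{p}^{\ast}}16^{a(1+c)}$. The inner sum equals $p-1$ when $c\equiv -1$, and equals $T$ otherwise, since $a\mapsto a(1+c)$ then permutes $\mathbb{Z}_{p}^{\ast}$ and $16^{p}\equiv 1$. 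Using $\sum_{c\in\mathbb{Z}_{p}^{\ast}}\big(\frac{c}{p}\big)=0$, the terms with $c\neq -1$ contribute $-\big(\frac{-1}{p}\big)T$ and the term $c=-1$ contributes $\big(\frac{-1}{p}\big)(p-1)$; collecting and inserting $T=\frac{4^{N}-1}{15}-1$ yields $G_{p}^{2}\equiv\big(\frac{-1}{p}\big)\big(p-\frac{4^{N}-1}{15}\big)$, which is exactly $(2)$.

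For part $(1)$ I would split $S_{A}(4)$ according to the support sets in Definition \ref{def2}:
$$S_{A}(4)=S_{1}+2\cdot 4^{p}+2E_{0}+3E_{1},\qquad S_{1}=\sum_{i\in D_{1}^{(2p)}}4^{i},\quad E_{j}=\sum_{i\in 2D_{j}^{(p)}}4^{i}.$$
Since $E_{0}+E_{1}=\sum_{a=1}^{p-1}16^{a}=T$ and $E_{0}-E_{1}=G_{p}$, one gets $2E_{0}+3E_{1}=\tfrac{5}{2}T-\tfrac{1}{2}G_{p}$. For the odd-index part, set $P=\sum_{i\in\mathbb{Z}_{2p}^{\ast}}4^{i}$ and $Q=\sum_{i\in\mathbb{Z}_{2p}^{\ast}}\big(\frac{i}{p}\big)4^{i}$; because $D_{0}^{(2p)},D_{1}^{(2p)}$ are the squares and non-squares of $\mathbb{Z}_{2p}^{\ast}$, matched by the Legendre symbol through $\mathbb{Z}_{2p}^{\ast}\cong\mathbb{Z}_{p}^{\ast}$, we have $S_{1}=\tfrac{1}{2}(P-Q)$. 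Here $P=4\cdot\frac{4^{N}-1}{15}-4^{p}$ is a direct geometric sum over the odd exponents (the $-4^{p}$ removing $i=p$). The decisive step is the evaluation of $Q$: multiplying by $4^{p}$ replaces each exponent $i$ by $i+p$, and since $p$ is odd this flips parity, carrying $\mathbb{Z}_{2p}^{\ast}$ bijectively onto the even residues $\{2,4,\dots,2p-2\}$ while preserving residues modulo $p$. Writing each image as $2a$ with $a\in\mathbb{Z}_{p}^{\ast}$ gives $4^{i+p}=16^{a}$ and $i\equiv 2a\pmod p$, so $\big(\frac{i}{p}\big)=\big(\frac{2}{p}\big)\big(\frac{a}{p}\big)$, and therefore $4^{p}Q=\big(\frac{2}{p}\big)\sum_{a=1}^{p-1}\big(\frac{a}{p}\big)16^{a}=\big(\frac{2}{p}\big)G_{p}$, i.e. $Q\equiv\big(\frac{2}{p}\big)4^{p}G_{p}$ using $4^{-p}\equiv 4^{p}$. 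Substituting $P,Q$ into $S_{1}$ and adding $2\cdot 4^{p}+2E_{0}+3E_{1}$ reproduces the $G_{p}$-term $-\tfrac{1}{2}\big(\big(\frac{2}{p}\big)4^{p}+1\big)G_{p}$ and the remaining part $\frac{1}{2}(3\cdot 4^{p}-5)+\frac{4^{p}+5}{2}\frac{4^{N}-1}{15}$.

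The main obstacle is the $Q$-to-$G_{p}$ reduction: one must set up the parity-flip bijection cleanly and keep track of both the shift by $4^{p}$ and the factor $\big(\frac{2}{p}\big)$ produced by rewriting $i$ as $2a$, since that factor is precisely the source of the $\big(\frac{2}{p}\big)$ in the statement. The only other care needed is bookkeeping of the fractional coefficients, all divisions by $2$ and $4$ being legitimate modulo the odd number $4^{N}-1$, and reconciling the coefficient of $\frac{4^{N}-1}{15}$ via the congruence $4^{p}\cdot\frac{4^{N}-1}{15}\equiv 4\cdot\frac{4^{N}-1}{15}$, which holds because $15\mid 4^{p}-4$ for odd $p$. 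Part $(2)$ is then routine once the ring-theoretic analogue of the Gauss-sum identity is in place.
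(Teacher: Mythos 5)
Your proposal is correct and follows essentially the same route as the paper: part (2) is the identical Gauss-sum computation via the substitution $b=ac$ (the paper's $y=xt$), and part (1) uses the same decomposition of $S_{A}(4)$ by support sets together with the same key reindexing of the odd-position sum onto even exponents, producing the term $\big(\frac{2}{p}\big)4^{p}G_{p}$ (your parity-flip by adding $p$ is the paper's CRT bijection $i=A(p+1)+p\mapsto 2a+p$ in disguise). The only cosmetic difference is that your coefficient of $\frac{4^{N}-1}{15}$ comes out as $\frac{9}{2}$ rather than the paper's $\frac{4^{p}+5}{2}$, and you correctly reconcile the two via $15\mid 4^{p}-4$.
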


\begin{proof}
(1). By the Chinese Remainder Theorem, We have isomorphism of rings
$$\varphi: \mathbb{Z}_{2p}\cong \mathbb{Z}_{p}\oplus \mathbb{Z}_{2}$$
by $\varphi(x\pmod{2p})=(x\pmod p, x\pmod 2)$. It is easy to see that for any element $(A, B)\in \mathbb{Z}_{p}\oplus \mathbb{Z}_{2}  (0\leq A \leq p-1, B\in \{0,1\})$, $\varphi^{-1}(A, B)=A(p+1)+pB\in \mathbb{Z}_{2p}.$ Then
\begin{align*}
\sum_{i\in D_{1}^{(2p)}}4^{i}
\equiv \sum_{A=1\atop (\frac{A}{p})=-1}^{p-1}4^{A(p+1)+p}\equiv\sum_{A=1\atop (\frac{2A}{p})=-1}^{p-1}4^{2A+p}\pmod{4^{2p}-1}  \qquad(i=A(p+1)+p)
\end{align*}

From Definition \ref{def2} we know that
\begin{align*}
S_{A}(4)& =\sum_{i\in D_{1}^{(2p)}}4^{i}+2\cdot 4^{p}+2\sum_{a\in D_{0}^{(p)}}4^{2a}+3\sum_{a\in D_{1}^{(p)}}4^{2a}\\
&\equiv  \sum_{a=1\atop (\frac{2a}{p})=-1}^{p-1}4^{p+2a}+2\cdot 4^{p}+2\sum_{a=1\atop (\frac{a}{p})=1}^{p-1}4^{2a}+3\sum_{a=1\atop (\frac{a}{p})=-1}^{p-1}4^{2a}\pmod {4^{N}-1}\notag\\
&\equiv  4^{p}\cdot\sum_{a=1\atop (\frac{a}{p})=-(\frac{2}{p})}^{p-1}4^{2a}+2\cdot 4^{p}+2\sum_{a=1}^{p-1}4^{2a}+\sum_{a=1\atop (\frac{a}{p})=-1}^{p-1}4^{2a}\pmod {4^{N}-1}\notag\\
&\equiv  4^{p}\cdot \frac{1}{2}\sum_{a=1}^{p-1}(1-(\frac{2}{p})(\frac{a}{p}))4^{2a}+2\cdot 4^{p}+2\sum_{a=1}^{p-1}4^{2a}+\frac{1}{2}\sum_{a=1}^{p-1}(1-(\frac{a}{p}))4^{2a}\pmod {4^{N}-1}\notag\\
&\equiv(\frac{4^{p}}{2}+2+\frac{1}{2})\sum_{a=1}^{p-1}4^{2a}+2\cdot 4^{p}-\frac{1}{2}((\frac{2}{p})4^{p}+1)G_{p}\pmod {4^{N}-1}\notag\\
&\equiv \frac{4^{p}+5}{2}(\frac{4^{N}-1}{15}-1)+2\cdot 4^{p}-\frac{1}{2}((\frac{2}{p})4^{p}+1)G_{p}\pmod {4^{N}-1}\notag\\
&\equiv \frac{1}{2}(3\cdot 4^{p}-5)+ \frac{4^{p}+5}{2}\cdot \frac{4^{N}-1}{15}
-\frac{1}{2}((\frac{2}{p})4^{p}+1)G_{p}\pmod {4^{N}-1}\notag
\end{align*}
(2). By the definition of $G_{p}$,
\begin{align*}
G_{p}^{2}& =\sum_{x,y=1}^{p-1}\big(\frac{xy}{p}\big)16^{x+y}\\
&\equiv  \sum_{x,t=1}^{p-1}\big(\frac{t}{p}\big)16^{x(1+t)}\pmod {4^{N}-1} \qquad(y=xt)\notag\\
&\equiv \big(\frac{-1}{p}\big)(p-1)+\sum_{t=1}^{p-2}\big(\frac{t}{p}\big)\sum_{x=1}^{p-1}16^{x(1+t)}\pmod {4^{N}-1}\notag\\
&\equiv \big(\frac{-1}{p}\big)(p-1)+\sum_{t=1}^{p-2}\big(\frac{t}{p}\big)\sum_{x=1}^{p-1}16^{x}\pmod {4^{N}-1}\notag\\
&\equiv \big(\frac{-1}{p}\big)(p-1)-\big(\frac{-1}{p}\big)(\frac{16^{p}-1}{15}-1)\pmod {4^{N}-1}\notag\\
&\equiv \big(\frac{-1}{p}\big)(p-\frac{4^{N}-1}{15})\pmod {4^{N}-1}\notag
\end{align*}
\end{proof}

\section{Proof of Theorem \ref{th1}}\label{sec3}

By Definition \ref{def1}, $C_{A}(4)=\log_{4}(\frac{4^{N}-1}{d})$ where $d=\gcd(S_{A}(4), 4^{N}-1)$. Since $N=2p$, $4^{N}-1=(4^{p}+1)(4^{p}-1)$ and $\gcd(4^{p}+1, 4^{p}-1)=\gcd(4^{p}+1, 2)=1$. We get $d=d_{+}d_{-}$ where both of
$$d_{+}=\gcd(S_{A}(4), 4^{p}+1)\quad \rm{and} \quad d_{-}=\gcd(S_{A}(4), 4^{p}-1)$$
are odd. We need to determine $d_{+}$ and $d_{-}$.

\begin{lemma}\label{lem3}
\begin{align*}
 d_{+}
 = \left\{ \begin{array}{ll}
5, & \textrm{if $5\mid p-2$;}\\
1, & \textrm{otherwise.}
\end{array} \right.
\end{align*}
\end{lemma}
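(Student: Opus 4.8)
The plan is to compute $d_{+}=\gcd(S_{A}(4),4^{p}+1)$ by reducing the expression for $S_{A}(4)$ in Lemma \ref{lem2}(1) modulo $M:=4^{p}+1$, where $4^{p}\equiv-1$. First I would record the arithmetic facts underlying the reduction: since $4^{2}\equiv1\pmod{15}$ and $p$ is odd, $4^{p}\equiv4\pmod{15}$, so $3\mid 4^{p}-1$ and $5\mid 4^{p}+1$; writing $\frac{4^{N}-1}{15}=\frac{4^{p}-1}{3}\cdot\frac{M}{5}$ and noting $\frac{4^{p}-1}{3}\equiv1\pmod 5$, one obtains the crucial congruence $\frac{4^{N}-1}{15}\equiv\frac{M}{5}\pmod{M}$. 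Substituting $4^{p}\equiv-1$, this congruence, and $\big(\frac{2}{p}\big)4^{p}+1\equiv1-\big(\frac{2}{p}\big)$ into Lemma \ref{lem2}(1), and using that $M$ is odd (so $d_{+}=\gcd(2S_{A}(4),M)$), I get
\[ 2S_{A}(4)\equiv -8+\tfrac{4M}{5}-\Big(1-\big(\tfrac{2}{p}\big)\Big)G_{p}\pmod{M}. \]
The factor $1-\big(\frac{2}{p}\big)$ splits the analysis into two cases.

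If $\big(\frac{2}{p}\big)=1$, the Gauss-sum term vanishes and $2S_{A}(4)\equiv\frac{4M}{5}-8\pmod{M}$, so $d_{+}=\gcd(\frac{4M}{5}-8,M)$ is elementary: from $5\big(\frac{4M}{5}-8\big)=4M-40$ every common divisor divides $40$, hence being odd divides $5$, so $d_{+}\in\{1,5\}$, with $d_{+}=5$ exactly when $5\mid\frac{4M}{5}-8$, i.e. when $\frac{M}{5}\equiv2\pmod 5$. If $\big(\frac{2}{p}\big)=-1$, then $G_{p}$ survives and $G_{p}\equiv-S_{A}(4)-4+\frac{2M}{5}\pmod{M}$; squaring and invoking Lemma \ref{lem2}(2) yields $\big(S_{A}(4)+4-\frac{2M}{5}\big)^{2}\equiv\big(\frac{-1}{p}\big)\big(p-\frac{M}{5}\big)\pmod{M}$. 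I would then argue $d_{+}$ is a power of $5$: if an odd prime $q\neq5$ divided both $S_{A}(4)$ and $M$, then reducing the squared identity modulo $q$ (where $S_{A}(4)\equiv0$ and $q\mid\frac{M}{5}$) forces $16\equiv\big(\frac{-1}{p}\big)p\pmod q$, i.e. $q\mid\big(\frac{-1}{p}\big)p-16$. But $q\mid4^{p}+1$ with $q\neq5$ forces $\mathrm{ord}_{q}(4)=2p$, hence $q\equiv1\pmod{2p}$ and $q\geq2p+1$; since $\big|\big(\frac{-1}{p}\big)p-16\big|\leq p+16<2p+1$ for $p\geq17$, this is impossible, and the finitely many primes $p\leq13$ are checked directly.

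This second case is the main obstacle: the Gauss sum cannot be removed by a direct reduction, so one must combine the quadratic relation of Lemma \ref{lem2}(2) with the order/size argument above (and a short hand-check for small $p$) to exclude every spurious common prime factor.

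It remains to pin down the $5$-part in both cases. By the lifting-the-exponent lemma $v_{5}(4^{p}+1)=1$ for $p\neq5$ (and $v_{5}(4^{5}+1)=2$). Reducing $2S_{A}(4)$ modulo $5$: since $16\equiv1\pmod5$ one has $G_{p}\equiv\sum_{a=1}^{p-1}\big(\frac{a}{p}\big)\equiv0\pmod5$, so in both cases $2S_{A}(4)\equiv-8+4\cdot\frac{M}{5}\pmod5$, whence $5\mid S_{A}(4)$ iff $\frac{M}{5}\equiv2\pmod5$. Finally $\frac{M}{5}\equiv2\pmod5\iff 4^{p}\equiv9\pmod{25}\iff p\equiv7\pmod{10}\iff 5\mid p-2$ for odd $p$. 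Combining everything: when $5\mid p-2$ we have $p\neq5$, so $v_{5}(M)=1$ and $5\mid S_{A}(4)$, giving $d_{+}=5$; otherwise $5\nmid S_{A}(4)$ (even for $p=5$, where the extra power of $5$ in $M$ is harmless), so together with the previous steps $d_{+}=1$, which is exactly the claim of Lemma \ref{lem3}.
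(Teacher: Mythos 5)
Your proposal is correct and follows essentially the same route as the paper: reduce Lemma \ref{lem2}(1) modulo $4^{p}+1$, split on $\big(\tfrac{2}{p}\big)$, use $G_{p}^{2}\equiv\big(\tfrac{-1}{p}\big)p$ together with an order argument to exclude primes $\ell\geq 7$, and settle the $5$-part via the order of $4$ modulo $25$ (equivalently $p\equiv 7\pmod{10}$). The only differences are cosmetic: you bound via $\mathrm{ord}_{q}(4)=2p$ (so $q\geq 2p+1$, leaving $p\leq 13$ to check directly) where the paper uses $\mathrm{ord}_{\ell}(2)=4p$ (leaving only $p\leq 5$), and you control $v_{5}(4^{p}+1)$ by lifting the exponent rather than by the paper's direct computation modulo $25$.
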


\begin{proof}
Let $\ell$ be a prime divisor of $d_{+}$. Then $S_{A}(4)\equiv 4^{p}+1\equiv 0\pmod\ell.$ From Lemma \ref{lem2}(1) and $\ell | 4^{p}+1$ we have
\begin{align}\label{L1}
S_{A}(4)\equiv -4+ \frac{2(4^{p}-1)}{15}(4^{p}+1)-\frac{1}{2}\big(-(\frac{2}{p})+1\big)G_{p}\pmod\ell
\end{align}
Since $4^{p}+1\equiv 2\pmod 3$, we know that $\ell \geq 5$. Firstly we consider the case $\ell =5$. In this case, $4^{p}+1\equiv (-1)^{p}+1\equiv 0\pmod 5$ and
\begin{align*}
 G_{p}
 = \sum_{a=1}^{p-1}\big(\frac{a}{p}\big)4^{2a}\equiv \sum_{a=1}^{p-1}\big(\frac{a}{p}\big)\equiv 0 \pmod 5
\end{align*}
Then by the formula (\ref{L1}) we get $S_{A}(4)\equiv -4- \frac{4}{15}(4^{p}+1)\pmod 5$. Therefore,
\begin{align*}
S_{A}(4)&\equiv 0\pmod 5 \Leftrightarrow \frac{4^{p}+1}{15}\equiv -1\pmod 5\Leftrightarrow 4^{p}+1\equiv-15\pmod {25}\\
&\Leftrightarrow 4^{p-2}\equiv -1\pmod {25}\Leftrightarrow 10\mid 2(p-2)\notag
\end{align*}
The last equivalence can be obtained by the fact  that the order of 4 modulo 25 is 10. Therefore, $5|d_{+}$ if and only if $5|p-2$. Moreover, assume that $5\mid p-2$. If $25\mid 4^{p}+1$, then $4^{2p}\equiv1\pmod {25}$ and then $10\mid 2p$ which contradicts to $5\mid p-2$. In summary, $5\mid d_{+}$ if and only if $5\mid p-2$ and when $5|p-2$ we have $25\nmid d_{+}.$

Now we assume $\ell \geq7$. The formula (\ref{L1}) becomes
\begin{align}
0\equiv S_{A}(4)\equiv -4- \frac{1}{2}\big(-(\frac{2}{p})+1\big)G_{p}\pmod\ell
\end{align}
and by Lemma \ref{lem2}(2), $G_{p}^{2}\equiv \big(\frac{-1}{p}\big)p\pmod \ell. $ If $p\equiv \pm1\pmod 8$ then $\big(\frac{2}{p}\big)=1$ and we get a contradiction $0\equiv -4\pmod\ell$. If $p\equiv \pm3\pmod 8$ then $\big(\frac{2}{p}\big)=-1$ and $G_{p}\equiv -4\pmod\ell$ by (2). Therefore $16\equiv G_{p}^{2}\equiv \big(\frac{-1}{p}\big)p\pmod\ell$ and then $\ell \mid p-16$ or $\ell \mid p+16$.
On the other hand, $2^{2p}=4^{p}\equiv -1\pmod\ell$ which means that the order of 2 modulo $\ell$ is $4p$. Therefore $4p\mid \ell-1$ and $4p\leq \ell-1\leq p+15$ which implies $p\leq 5$. If $p=3$, then $l=13$. From $G_3=\sum_{a=1}^2(\frac{a}{3})4^{2a}$ we get $G_3=4^2-4^4\equiv -6 \pmod {13}$. By (2) we get $G_3\equiv -4\pmod {13}$ which is a contradiction. Then by $p\equiv \pm3\pmod 8$ we get $p=5$  and $\ell=21$ which contradicts to that $\ell$ is a prime number. In summary, we get $d_{+}=5$ if $5\mid p-2$ and $d_{+}=1$ otherwise. This completes the proof of Lemma \ref{lem3}.
\end{proof}

\begin{lemma}\label{lem4}
$d_{-}=1$
\end{lemma}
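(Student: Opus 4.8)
The plan is to mirror the proof of Lemma \ref{lem3}, working this time modulo $4^{p}-1$ rather than $4^{p}+1$. Let $\ell$ be a prime divisor of $d_{-}$, so that $\ell \mid 4^{p}-1$ and $\ell \mid S_{A}(4)$; since $d_{-}$ is odd, $\ell$ is odd. The first step is to reduce the expression for $S_{A}(4)$ in Lemma \ref{lem2}(1) modulo $4^{p}-1$, which is legitimate because $4^{p}-1 \mid 4^{N}-1$. Using $4^{p}\equiv 1$, the constant term $\frac{1}{2}(3\cdot 4^{p}-5)$ collapses to $-1$, and the middle term $\frac{4^{p}+5}{2}\cdot\frac{4^{N}-1}{15}$ should vanish: writing $\frac{4^{N}-1}{15}=\frac{4^{p}-1}{3}\cdot\frac{4^{p}+1}{5}$ and reducing shows $\frac{4^{N}-1}{15}\equiv \frac{4^{p}-1}{3}\pmod{4^{p}-1}$, while $\frac{4^{p}+5}{2}\equiv 3$, so the product is $\equiv 4^{p}-1\equiv 0$. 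Hence I expect
\begin{equation*}
S_{A}(4)\equiv -1-\tfrac{1}{2}\big(\big(\tfrac{2}{p}\big)+1\big)G_{p}\pmod{4^{p}-1}.
\end{equation*}

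Next I would split on the value of $\big(\frac{2}{p}\big)$. If $\big(\frac{2}{p}\big)=-1$ (i.e.\ $p\equiv\pm3\pmod 8$) the $G_{p}$-term disappears and $S_{A}(4)\equiv -1\pmod{4^{p}-1}$, so $\gcd(S_{A}(4),4^{p}-1)=1$ and $d_{-}=1$ immediately. If $\big(\frac{2}{p}\big)=1$ then $S_{A}(4)\equiv -1-G_{p}\pmod{4^{p}-1}$. Before going further I would record that $\ell\neq 3$: since $16\equiv 1\pmod 3$, one has $G_{p}\equiv\sum_{a=1}^{p-1}\big(\frac{a}{p}\big)=0\pmod 3$, whence $S_{A}(4)\equiv -1\pmod 3$, so $3\nmid S_{A}(4)$ even though $3\mid 4^{p}-1$. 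Thus every prime divisor $\ell$ of $d_{-}$ satisfies $\ell\geq 5$.

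The core of the argument is then the case $\big(\frac{2}{p}\big)=1$ with $\ell\geq5$. From $\ell\mid S_{A}(4)$ I get $G_{p}\equiv -1\pmod\ell$, hence $G_{p}^{2}\equiv 1\pmod\ell$. On the other hand, because $\ell\mid 4^{p}-1$ and $\ell\neq 3$ we have $\ell\mid\frac{4^{p}-1}{3}$, so $\frac{4^{N}-1}{15}\equiv 0\pmod\ell$ and Lemma \ref{lem2}(2) gives $G_{p}^{2}\equiv\big(\frac{-1}{p}\big)p\pmod\ell$. Comparing the two, $\big(\frac{-1}{p}\big)p\equiv 1\pmod\ell$, i.e.\ $\ell\mid p-1$ when $p\equiv1\pmod4$ and $\ell\mid p+1$ when $p\equiv3\pmod4$; in either case $\ell\leq p+1$. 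The decisive step --- and the one I expect to be the main obstacle --- is to contradict this with the multiplicative order of $2$: since $\ell\mid 4^{p}-1=2^{2p}-1$, $\mathrm{ord}_{\ell}(2)$ divides $2p$, so it lies in $\{1,2,p,2p\}$. The values $1$ and $2$ force $\ell\mid 1$ and $\ell\mid 3$ respectively, both impossible for $\ell\geq5$; and if $\mathrm{ord}_{\ell}(2)\in\{p,2p\}$ then $p\mid\ell-1$, giving $\ell\geq p+1$ and hence $\ell=p+1$, which is even and contradicts $\ell$ odd. Therefore no prime divides $d_{-}$, and $d_{-}=1$.
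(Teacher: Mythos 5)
Your proof is correct and follows essentially the same route as the paper's: reduce Lemma \ref{lem2}(1) modulo $4^{p}-1$, rule out $\ell=3$, combine $G_{p}\equiv-1$ with Lemma \ref{lem2}(2) to force $\ell\mid p\pm1$, and contradict this with $p\mid\ell-1$ coming from the multiplicative order (you use $\mathrm{ord}_{\ell}(2)\mid 2p$, the paper uses that the order of $4$ divides the prime $p$, but the conclusion is identical). The only cosmetic differences are that the paper computes $S_{A}(4)\equiv 2\pmod 3$ directly from Definition \ref{def2} and sharpens $\ell\mid p\pm1$ to $\ell\mid\frac{p\pm1}{2}$ using that $\ell$ is odd, whereas you reach the contradiction via the parity of $p+1$.
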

\begin{proof}
Let $\ell$ be a prime divisor of $d_{-}$. Then $S_{A}(4)\equiv 4^{p}-1\equiv 0\pmod\ell.$ From
$$S_{A}(4)\equiv 2\cdot 4^{p}+\sum_{a\in D_{1}^{(2p)}}1+\sum_{a\in D_{0}^{(p)}}2\equiv 3\cdot\frac{p-1}{2}+2\equiv 2\pmod 3$$
we know that $\ell\geq 5$. From $4^{p}\equiv 1\pmod\ell$ and $\ell\geq 5$ we know that the order of 4 modulo $\ell$ is $p$. Therefore $p\mid \ell-1$. On the other hand, by Lemma \ref{lem2} we have
$$0\equiv S_{A}(4)\equiv -1-\frac{1}{2}(\big(\frac{2}{p}\big)+1)G_{p}\pmod l, \quad G_{p}^{2}\equiv \big(\frac{-1}{p}\big)p \pmod\ell$$
If $\big(\frac{2}{p}\big)=-1$, we get contradiction $0\equiv -1\pmod\ell$. If $\big(\frac{2}{p}\big)=1$ then $1\equiv -G_{p}\pmod l$ and $1\equiv G_{p}^{2}\equiv \big(\frac{-1}{p}\big)p\pmod\ell$. We get $\ell\mid\frac{p+1}{2}$ or $\ell\mid\frac{p-1}{2}$. Then we have $p\leq\ell-1\leq\frac{1}{2}(p+1)-1=\frac{1}{2}p-\frac{1}{2}$ which is a contradiction. Therefore we get $d_{-}=1$.
\end{proof}

{\bfseries{Proof of Theorem \ref{th1}}} \quad By Lemma \ref{lem3} and Lemma \ref{lem4} we get
\begin{align*}
 d=d_{+}d_{1}
 = \left\{ \begin{array}{ll}
5, & \textrm{if $5\mid p-2$;}\\
1, & \textrm{otherwise.}
\end{array} \right.
\end{align*}
Therefore
\begin{align*}
C_{A}(4)=\log_{4}(\frac{4^{N}-1}{d})
 = \left\{ \begin{array}{ll}
\log_{4}(\frac{4^{N}-1}{5}), & \textrm{if $5\mid p-2$;}\\
\log_{4}(4^{N}-1), & \textrm{otherwise.}
\end{array} \right.
\end{align*}


\begin{thebibliography}{1}

\bibitem{T. W. Cusick}
T. W. Cusick, C. Ding and A. Renvall, \emph{Stream Ciphers and Number Theory}, (revised edition), Elsevier, 2004.
\bibitem{X. Du}
X. Du and Z. Chen, ``Linear complexity of quaternary sequences generated using generalized cyclotomic classes modulo $2p$," IEICE Trans. Fundamentals, vol. E94-A, no. 5, pp. 1214-1217, 2011.
\bibitem{H. G. Hu}
H. G. Hu, ``Comments on a new method to compute the 2-adic complexity of binary sequences ," IEEE Trans. Inform. Theory, vol. 60, no. 9, pp. 5803-5804, 2014.
\bibitem{A. Klapper}
A. Klapper and M. Goresky, ``Feedback shift registers, 2-adic span, and combiners with memory," Jour of Cryptology, vol. 10, no. 2, pp. 111-147, 1997.
\bibitem{J. Xu}
A. Klapper and J. Xu, ``Algebraic feedback shift registers," Theoretical Computer Science, vol. 226, no. 1-2, pp. 61-92, 1999.
\bibitem{Y.-J. Kim}
Y.-J. Kim, Y.-P. Hong and H.-Y. Song, ``Autocorrelation of some quaternary cyclotomic sequences of length $2p$," IEICE Trans. Fundamentals, vol. E91-A, no. 12,  pp. 3679-3684, 2008.
\bibitem{S. M. Krone}
S. M. Krone and D. V. Sarwate, ``Quadriphase sequences for spread-spectrum multiple-access communication," IEEE Trans. Inform. Theory, vol. 30, no. 3, pp. 520-529, 1984.


\end{thebibliography}
\end{document}